\newtheorem{theorem}{Theorem}
\newtheorem{assumption}{Assumption}
\newtheorem{definition}{Definition}
\newtheorem{remark}{Remark}
\DeclareMathOperator{\sgn}{sign}
\newenvironment{proof}[1][\bf Proof]%
{\smallskip\par\noindent\textit{#1: }}%
{\hspace*{\fill} \rule{6pt}{6pt}\smallskip}
\newenvironment{proof*}[1][Proof]%
{\smallskip\par\noindent\textbf{#1: }}
{\smallskip}
\def\R{\mathbb{R}}
\def\Z{\mathbb{Z}}
\newenvironment{system}[1]%
  {\setlength{\arraycolsep}{0.5mm}\left\{ \; \begin{array}{#1}}%
      {\end{array} \right.}
\title{Periodic Behaviors in Constrained  \\
Multi-agent Systems
\thanks{This work has been supported in part by the Knut and Alice Wallenberg Foundation, the Swedish Research
Council and EU HYCON2. A preliminary version of this paper will be presented at the 52nd IEEE Conference on Decision
and Control (CDC 2013), December 10-13, Florence, Italy. 
The contribution of this paper is two-fold:
1) we propose another constrained multi-agent model besides the one in \cite{yang-meng-dimarogonas-johansson-cdc13}
to produce the periodic behavior, and
2) we observe specific differences between these two models.}}
\author{Tao Yang, Ziyang Meng, Dimos V. Dimarogonas, and Karl H. Johansson
\thanks{The authors are with ACCESS Linnaeus Centre, School of Electrical Engineering, Royal Institute of Technology.
Stockholm 10044, Sweden. Email: {\tt\small $\{$taoyang, ziyangm, dimos, kallej$\}$@kth.se}}}
\begin{document}
\maketitle

\begin{abstract}
In this paper, we provide two discrete-time multi-agent models which generate periodic behaviors.
The first one is a multi-agent system of identical double integrators with input saturation constraints,
while the other one is a multi-agent system of identical neutrally stable system with input saturation constraints.
In each case, we show that if the feedback gain parameters of the local controller
satisfy a certain condition, the multi-agent system exhibits a periodic solution.
\end{abstract}

{\bf Keywords:} Periodic Solution, Multi-Agent Models

\section{Introduction} \label{sec-intro}

Generating sustainable oscillations in engineering systems is of fundamental importance, see e.g.,
\cite{chua-book,strogatz-book2,johansson-barabanov-astrom}. Many interconnected systems have a tendency to synchronize
their phase and frequency. Classical examples of oscillation multi-agent systems include fireflies
\cite{buck-firefly,strogatz-book}, Kuramoto oscillators \cite{kuramoto}, and Huygens' clock synchronization \cite{huygens}.

In the continuous-time setting, the existence of oscillatory behaviors in diffusively coupled
systems has been considered in \cite{pogromsky-glad-nijmeijer},
while the synchronization of Kuramoto oscillators have been studied in \cite{strogatz,dorfler-bullo}.
Periodic behaviors in diffusively coupled discrete-time systems seem to have been neglected in the literature.
In this paper, we propose two discrete-time constrained multi-agent models which generate periodic behaviors.
The agents in the network are identical, diffusively coupled and have input saturation constraints.
The identical agent model is a double integrator in the first case, and is a neutrally stable system in the other case.
In each case, we show that if the feedback gain parameters of the local controller satisfy a certain condition, then
the multi-agent system exhibits a periodic solution.
The contribution of this paper is two-fold:
1) we propose another constrained multi-agent model besides the one in \cite{yang-meng-dimarogonas-johansson-cdc13}
to produce the periodic behavior, and
2) we observe specific differences between these two models.
More specifically, the feedback gain parameters for achieving the
periodic behavior do not depend on the network topology in the double integrator case,
however they depend on the network topology in the neutrally stable case.
The period depends on the network topology in the double integrator, however it is independent of the network topology in the neutrally stable case.
This paper can also be viewed as an extension of the results in Theorem 21.9 and Corollary 21.10 of \cite{rugh} to multi-agent systems.


\section{Preliminaries and Notations}\label{sec-preliminary}
In this paper, we assume that the communication topology
among the agents is described by a fixed undirected weighted
graph $\mathcal{G}=(\mathcal{V},\mathcal{E},\mathcal{A})$,
with the set of agents $\mathcal{V}=\{1,\ldots, N\}$, the set of undirected edges
$\mathcal{E} \subseteq \mathcal{V} \times \mathcal{V}$,
and a weighted adjacency matrix $\mathcal{A}=[a_{ij}] \in {\R}^{N \times N}$,
where $a_{ij}>0$ if and only if $(j,i) \in \mathcal{E}$ and $a_{ij}=0$ otherwise.
We also assume that $a_{ij}=a_{ji}$ for all $i,j \in \mathcal{V}$ and that
there are no self-loops, i.e., $a_{ii}=0$ for $i \in \mathcal{V}$.
The set of neighboring agents of agent $i$ is defined as
$\mathcal{N}_i=\{j \in \mathcal{V}|a_{ij}>0\}$.
A path from node $i_{1}$ to $i_{k}$ is a sequence of nodes $\left\{i_1,\ldots,i_k\right\}$ such that $(i_j,i_{j+1}) \in \mathcal{E}$
for $j=1,\ldots,k-1$.
The unweighted distance between two nodes $i$ and $j$ denoted by $d(i,j)$ is the number of edges of a path between $i$ and $j$ minimized over all
such paths.
An undirected graph is said to be connected if there exists a path between any pair of distinct nodes.
A node is called a root if there exists a path to every other node. For a connected graph, every node is a root.

For an undirected weighted graph $\mathcal{G}$,
a matrix $L=\{\ell\}_{ij} \in {\R}^{N\times N}$ with $\ell_{ii}=\sum_{j=1}^{N}a_{ij}$ and $\ell_{ij}=-a_{ij}$ for $j \neq i$,
is called Laplacian matrix associated with the graph $\mathcal{G}$. It is well known that the Laplacian matrix has
the property that all the row sums are zero.
If the undirected weighted graph $\mathcal{G}$ is connected, then $L$ has a simple eigenvalue at zero with corresponding
right eigenvector ${\bf 1}$ and all other eigenvalues are strictly positive.
For such a case, all the eigenvalues of the Laplacian matrix
can be ordered as $0=\lambda_1<\lambda_2 \leq \ldots \leq \lambda_N$. 
The set of positive integers is denoted by $\Z^{+}$ while the set of non-negative integer is denoted by $\Z$.
$I$ denotes the identity matrix whose dimension can be deducible from the context.

\section{Two Multi-agent Models}\label{sec-prob}
In this paper, we propose two discrete-time multi-agent models which generate periodic behaviors.

The first model is a multi-agent system of $N$ identical
discrete-time double integrators described by
\begin{equation}\label{agent-model-di}
\begin{bmatrix}
x_i(k+1) \\
v_i(k+1)
\end{bmatrix}=\begin{bmatrix}
1 & 1 \\
0 & 1
\end{bmatrix}\begin{bmatrix}
x_i(k) \\
v_i(k)
\end{bmatrix}+\begin{bmatrix}
0 \\
1
\end{bmatrix}\sigma(u_i(k)), \quad i \in \mathcal{V},
\end{equation}
where $\sigma(u)$ is the standard saturation function:
$\sigma(u)=\sgn(u)\min\{1,|u|\}$.

The second model is a multi-agent system of $N$ identical
discrete-time neutrally stable systems described by
\begin{equation}\label{agent-model-ns}
\begin{bmatrix}
x_i(k+1) \\
v_i(k+1)
\end{bmatrix}=A \begin{bmatrix}
x_i(k) \\
v_i(k)
\end{bmatrix}+B \sigma(u_i(k)):=\begin{bmatrix}
0 & 1\\
-1 & 2a
\end{bmatrix} \begin{bmatrix}
x_i(k) \\
v_i(k)
\end{bmatrix}+\begin{bmatrix}
0 \\
1
\end{bmatrix} \sigma(u_i(k)), \quad i \in \mathcal{V},
\end{equation}
where $-1<a<1$ and $a \neq 0$.

\begin{remark}
The model \eqref{agent-model-ns} is not as restricted as it seems to be.
In fact, any planar neutrally stable system with single input channel,
such that the pair $(A,B)$ is controllable, and the matrix $A$ has all the eigenvalues on the unit circle
except $\pm 1$ and $\pm j$, can be transferred into \eqref{agent-model-ns}.
\end{remark}

We make following assumption about the network topology.
\begin{assumption}\label{ass-network}
The undirected graph $\mathcal{G}$ is connected.
\end{assumption}

Consider state feedback control laws
based on the agent state relative to that of neighboring agents
with feedback gain parameters $\alpha$ and $\beta$ of the form
\begin{equation}\label{controller-di}
u_i(k)=\alpha \sum _{j \in \mathcal{N}_i} a_{ij}(x_j(k)-x_i(k))+\beta \sum _{j \in \mathcal{N}_i} a_{ij}(v_j(k)-v_i(k)). 
\end{equation}

In this paper, we will examine the behavior of the multi-agent systems \eqref{agent-model-di}
and \eqref{agent-model-ns} under the distributed controller \eqref{controller-di} respectively,
and show that both multi-agent systems exhibit a periodic solution, defined as follows.

\begin{definition}\label{def-periodic}
The multi-agent system \eqref{agent-model-di} or \eqref{agent-model-ns} under the distributed controller
\eqref{controller-di} exhibits a periodic solution with a period $T>0$,
if for some initial states $x_i(0)$ and $v_i(0)$ for $i \in \mathcal{V}$, we have
$x_i(k+T)=x_i(k)$ and $v_i(k+T)=v_i(k)$ for all $k \in \Z $ and for all $i \in \mathcal{V}$.
\end{definition}

\section{Main Results}\label{sec-main}

For presenting our main results, we need to define the following sets
based on whether distance between agent $i \in \mathcal{V}$ and the root agent $1$
\footnote{Since the graph is connected,
without loss of generality, we assume that the agent $1$ is the root agent.}
is even or odd,
\begin{equation}\label{set-even-odd}
S_{e}=\{i | d(i,1)=2s \}, \, \text{ and } \, S_{o}=\{i | d(i,1)=2s+1\}, \quad s\in \Z.
\end{equation}
Let us also define
\begin{equation}\label{min-edge-weight-ns}
\bar{a}=\min_{\substack{(i,j) \in \mathcal{E} \\i \in S_e, \, j \in S_o}} a_{ij}.
\end{equation}
We are now ready to present our main results.

\subsection{Double Integrator Case}

For the multi-agent system \eqref{agent-model-di} under the distributed controller \eqref{controller-di}, we
have the following result.

\begin{theorem}\label{thm1}
Assume that Assumption \ref{ass-network} is satisfied.
Consider the multi-agent system \eqref{agent-model-di} under the distributed controller \eqref{controller-di}.
If the feedback gain parameters $\alpha$ and $\beta$ satisfy
\begin{equation}\label{period}
0< \alpha <\beta < \tfrac{3}{2}\alpha,
\end{equation}
then the multi-agent system exhibits a periodic solution with some period $T>0$.
\end{theorem}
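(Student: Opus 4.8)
My plan is to look for solutions confined to the low-dimensional invariant family adapted to the partition $\{S_e,S_o\}$, reduce the closed loop on that family to a single planar relay map, and then exhibit an explicit periodic orbit of that map whose existence is exactly what \eqref{period} buys. First I would stack the agent states and write the feedback compactly as $u=-\alpha Lx-\beta Lv$ with $L$ the Laplacian, and restrict to \emph{partition-uniform} data: $x_i(0)=x^e,\,v_i(0)=v^e$ for $i\in S_e$ and $x_i(0)=x^o,\,v_i(0)=v^o$ for $i\in S_o$. The key observation is that for such data every edge internal to $S_e$ or to $S_o$ contributes a zero difference to \eqref{controller-di}, so only the cross edges matter and $(Lx)_i=d_i^{\,o}(x^e-x^o)$ for $i\in S_e$, with $d_i^{\,o}=\sum_{j\in\mathcal{N}_i\cap S_o}a_{ij}$ (and symmetrically on $S_o$). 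Hence $\sgn(u_i(k))$ is the \emph{same} for every $i\in S_e$ and opposite for every $i\in S_o$; since $\sigma$ enters the update only through this common sign once the input is saturated, partition-uniformity is preserved for all $k$. Connectedness together with the parity of shortest paths to the root guarantees that each node of $S_e$ has a neighbour in $S_o$ and vice versa, so $d_i^{\,o},d_j^{\,e}\ge\bar{a}>0$.

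On this invariant set the closed loop collapses to the planar map
\begin{equation*}
X(k+1)=X(k)+V(k),\qquad V(k+1)=V(k)-2\,\sgn\!\big(\alpha X(k)+\beta V(k)\big),
\end{equation*}
in the inter-cluster differences $X=x^e-x^o$, $V=v^e-v^o$, which is entirely independent of the topology. The common modes $x^e+x^o$ and $v^e+v^o$ are conserved, so after choosing them to vanish the full network state will be periodic exactly when $(X,V)$ is. This reduced problem is precisely the saturated discrete double integrator treated in Theorem 21.9 and Corollary 21.10 of \cite{rugh}, and the remaining task is to produce a closed orbit of it and to certify that the saturation is active along it.

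Next I would exhibit an orbit with the half-period odd symmetry $\big(X(k+m),V(k+m)\big)=\big(-X(k),-V(k)\big)$ inherited from the oddness of the map. Prescribing $V$ to decrease by $2$ over the first $m$ steps fixes $V(0)=m$, $X(0)=-m/2$, after which symmetry forces the full period $T=2m$ to close, so closure reduces to a single sign requirement on the first half: the quadratic $g(k)=\alpha X(k)+\beta V(k)$ must stay positive for $k=0,\dots,m-1$ and reverse at $k=m$. Since $g$ is a downward parabola with vertex near $m/2$, its minimum over $\{0,\dots,m-1\}$ is attained at the endpoints, and I would verify that $0<\alpha<\beta<\tfrac32\alpha$ makes both endpoints positive for all sufficiently large $m$ — the upper bound $\beta<\tfrac32\alpha$ being what keeps $g(m-1)>0$ and prevents a premature switch, while $\alpha<\beta$ fixes the correct triangle-wave switching geometry. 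Finally, because $|u_i(k)|=d_i^{\,o}\,|g(k)|\ge\bar{a}\,|g(k)|$ and $|g(k)|$ grows linearly in $m$, taking $m$ large enough forces $|u_i(k)|\ge 1$ throughout, so the fully saturated assumption is self-consistent and the period $T=2m$ indeed depends on the topology through $\bar{a}$.

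The main obstacle is exactly this self-consistency: the reduction to the planar map and the symmetric ansatz are legitimate only if the guessed switching instants are the true ones, so the crux is to prove that $g(k)$ retains the asserted sign for \emph{every} $k$ in the period — that the parabola inequalities hold simultaneously — and to tie those inequalities precisely to the band \eqref{period}. Everything else (invariance of the partition-uniform set, conservation of the common mode, and lifting the reduced orbit back to a genuine periodic solution of the whole network) is bookkeeping once the partition-uniform trick has removed all dependence on the detailed graph.
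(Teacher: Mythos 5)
Your proposal is correct, and it rests on the same ansatz as the paper's proof --- the parity classes $S_e,S_o$ of \eqref{set-even-odd}, fully saturated square-wave inputs over a period $T=2m$, and a lower bound on $m$ involving $\bar a$ of \eqref{min-edge-weight-ns} --- but the verification is organized along a genuinely different route. The paper never collapses the network: it keeps one inequality per cross edge, requires equality of initial positions only along intra-cluster edges \eqref{cyclic-cond}, uses a monotonicity argument (this is exactly where $\beta>\alpha$ enters) to show that the per-edge quantity attains its extremum at $k=m-1$, and ends with the interval condition \eqref{xxxx} on each cross-edge difference $x_i(0)-x_j(0)$, thereby producing a whole family of periodic initial states. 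You instead restrict to fully cluster-uniform data, which (given that saturation is active, and given that connectedness forces every node to have a cross neighbour, so $d_i^{\,o},d_j^{\,e}\ge \bar a>0$) collapses the closed loop to a planar relay map in $(X,V)=(x^e-x^o,v^e-v^o)$; oddness of that map then yields the second half-period for free, and concavity of $g(k)=\alpha X(k)+\beta V(k)$ in $k$ reduces the $2Nm$ saturation inequalities to positivity of $g$ at the two endpoints $k=0$ and $k=m-1$, i.e.\ $g(0)=m(\beta-\tfrac{\alpha}{2})$ and $g(m-1)=m(\tfrac32\alpha-\beta)+2(\beta-\alpha)$, whose linear growth in $m$ recovers precisely the paper's threshold \eqref{suffm}. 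Each approach buys something: yours is leaner --- two scalar inequalities instead of per-edge bookkeeping, and it never actually uses $\alpha<\beta$ (only $\beta>\tfrac{\alpha}{2}$ and $\beta<\tfrac32\alpha$), so it proves the result on a strictly wider parameter band than \eqref{period} --- while the paper's route yields a richer set of admissible initial conditions, of which your construction is the midpoint of the interval \eqref{xxxx}. One small wording fix: $x^e+x^o$ is not conserved in general (it increments by $v^e+v^o$ each step); only $v^e+v^o$ is invariant, so you need --- as you in fact do --- to choose $v^e(0)+v^o(0)=0$ in order for $x^e+x^o$ to stay constant and for periodicity of $(X,V)$ to lift to periodicity of the full state.
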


\begin{proof}
Since the graph is connected, every agent is a root agent.
Without loss of generality, we assume that the agent $1$ is the root agent.

We shall prove the theorem by explicitly constructing
periodic solutions with an even period $T=2m$, for some $m \in \mathcal{Z}^{+}$, which
will be specified later in the proof.

Let the sets $S_e$ and $S_o$ be defined by \eqref{set-even-odd}.
The periodic solution that we will construct is such that all agents are always in saturation.
Moreover, the saturated input sequences are composed of $1$ for the first $m$
steps, followed by $-1$ for the next $m$ steps for agent $i \in S_e$,
and the saturated input sequences are composed of $-1$ for the first $m$
steps, followed by $1$ for the next $m$ steps for agent $i \in S_o$, that is,
\begin{equation}\label{group1}
\begin{system}{ccl}
& u_i(k) \geq 1,  & k=0,\ldots,m-1, \\
& u_i(k)  \leq -1,  & k=m,\ldots,2m-1,
\end{system}\quad \quad i \in S_e,
\end{equation}
\begin{equation}\label{group2}
\begin{system}{ccl}
& u_i(k) \leq -1, &  k=0,\ldots,m-1, \\
& u_i(k) \geq 1, & k=m,\ldots,2m-1,
\end{system}\quad \quad i \in S_o.
\end{equation}
In what follows, we will show that the above $2Nm$ inequalities are
satisfied for certain positive integer $m$ and initial states $x_i(0)$ and $v_i(0)$ for $i \in \mathcal{V}$,
and that moreover the multi-agent systems exhibits a periodic behavior with period $T=2m$
for these initial states in three steps.

{\bf Step 1:} Due to the required characteristic of the saturated input sequences, i.e., \eqref{group1} and
\eqref{group2}, we see that in order to have the periodic solution defined in Definition \ref{def-periodic},
it is sufficient to have $x_i(T)=x_i(0)$ and $v_i(T)=v_i(0)$ for all $i \in \mathcal{V}$.

It follows from \eqref{agent-model-di}, \eqref{group1} and \eqref{group2} that
for $k=1,\ldots,m$,
\begin{equation*}\label{ineq-di-1}
\begin{system}{ccl}
x_i(k) &=& x_i(0)+k v_i(0)+\tfrac{k(k-1)}{2}, \\
v_i(k) &=& v_i(0)+k,
\end{system} \quad i \in S_e,
\end{equation*}
and
\begin{equation*}\label{ineq-di-2}
\begin{system}{ccl}
x_i(k) &=& x_i(0)+k v_i(0)-\tfrac{k(k-1)}{2}, \\
v_i(k) &=& v_i(0)-k,
\end{system}\quad i \in S_o,
\end{equation*}
while for $k=m+1,\ldots,2m$,
\begin{equation*}\label{ineq-di-3}
\begin{system}{ccl}
x_i(k) &=& x_i(m)+(k-m) v_i(m)-\tfrac{(k-m)(k-m-1)}{2}, \\
v_i(k) &=& v_i(m)-(k-m),
\end{system}\quad i \in S_e,
\end{equation*}
and
\begin{equation*}\label{ineq-di-4}
\begin{system}{ccl}
x_i(k) &=& x_i(m)+(k-m) v_i(m)+\tfrac{(k-m)(k-m-1)}{2}, \\
v_i(k) &=& v_i(m)+(k-m),
\end{system}\quad i \in S_o.
\end{equation*}
Note that $v_i(T)=v_i(0)$ for all $i \in \mathcal{V}$.
It is also easy to obtain that
\[
\begin{system}{ccl}
x_i(2m)&=&x_i(0)+2m v_i(0)+m^2, \quad i \in S_e, \\
x_i(2m)&=&x_i(0)+2m v_i(0)-m^2, \quad i \in S_o.
\end{system}
\]
Thus, in order to have $x_i(T)=x_i(0)$ for all $i \in \mathcal{V}$, we must have that
\begin{equation}\label{initial-cond-v}
\begin{system}{ccl}
v_i(0)&=&-\frac{m}{2},  \quad i \in S_e. \\
v_i(0)&=&\frac{m}{2},  \quad  i \in S_o.
\end{system}
\end{equation}

{\bf Step 2:} In this step, we show that the $2m$ inequalities, either \eqref{group1} or \eqref{group2}
can be reduced to only two inequalities by appropriately choosing initial states $x_i(0)$ for some $i \in \mathcal{V}$.

{\bf Step 2.1:} Consider the input for an agent $j \in S_o$,
\begin{align}
u_j(k) &=\alpha \sum _{i \in \mathcal{N}_j} a_{ij}(x_i(k)-x_j(k))+\beta \sum _{i \in \mathcal{N}_j} a_{ij}(v_i(k)-v_j(k)) \nonumber \\
&=\alpha \sum _{i \in \mathcal{N}_j \cap S_e} a_{ij}(x_i(k)-x_j(k))+\beta \sum _{i \in \mathcal{N}_j \cap S_e} a_{ij}(v_i(k)-v_j(k)) \nonumber \\
&\hspace*{0.4cm}+\alpha \sum _{i \in \mathcal{N}_j \cap S_o} a_{ij}(x_i(k)-x_j(k))+\beta \sum _{i \in \mathcal{N}_j \cap S_o} a_{ij}(v_i(k)-v_j(k)). \label{keyineq2}
\end{align}
Choose $x_i(0)=x_j(0)$ for $i \in S_o$ if $(i,j) \in \mathcal{E}$.
By applying this and the fact that $v_i(0)=v_j(0)$ for all $i,j \in S_o$ implied by \eqref{initial-cond-v} to
\eqref{keyineq2}, we obtain
\begin{equation}\label{keyineq1}
u_j(k) =\alpha \sum _{i \in \mathcal{N}_j \cap S_e} a_{ij}(x_i(k)-x_j(k))+\beta \sum _{i \in \mathcal{N}_j \cap S_e} a_{ij}(v_i(k)-v_j(k)).
\end{equation}
Similarly, for an agent $i \in S_e$, choosing $x_j(0)=x_i(0)$ for $j \in S_e$ if $(i,j) \in \mathcal{E}$,
yields
\begin{equation*}
u_i(k) =\alpha \sum _{j \in \mathcal{N}_i \cap S_o} a_{ij}(x_j(k)-x_i(k))+\beta \sum _{j \in \mathcal{N}_i \cap S_o} a_{ij}(v_j(k)-v_i(k)).
\end{equation*}

{\bf Step 2.2:} Let us now focus on any edge  $(i,j) \in \mathcal{E}$, where $i \in S_e$ and $j \in S_o$.
We first note that $0<\alpha<\beta$ from \eqref{period}. This implies:
\[
\beta >\alpha - \tfrac{1}{2} k \alpha
\]
for $k=0,\ldots,m-1$. This yields
\[
-\tfrac{\alpha  m}{2}+\beta > \tfrac{1}{2} \alpha(-m-k+2).
\]
Since $m-k-1\geq 0$, multiplying the
above inequality on both sides with $m-k-1$ yields:
\[
- \tfrac{\alpha m}{2} (m-k-1)+\beta(m-k-1)
\geq \alpha \left[ \tfrac{k(k-1)}{2} - \tfrac{(m-1)(m-2)}{2} \right].
\]
This is equivalent to:
\begin{align}
& a_{ij} \Bigl\{ \alpha \left[x_i(m-1)-x_j(m-1)\right]+\beta \left[v_i(m-1)-v_j(m-1)\right] \Bigr\}  \nonumber \\
&\geq a_{ij} \Bigl\{ \alpha [x_i(k)-x_j(k)]+\beta [v_i(k)-v_j(k)]\Bigr\} \label{keyineq4}
\end{align}
for $k=0,\ldots,m-1$, since $v_i(0)=-\frac{m}{2}$ for $i \in S_e$, $v_j(0)=\frac{m}{2}$ for $j \in S_o$,
and $a_{ij} \geq 0$.

{\bf Step 2.3:} Since the inequality \eqref{keyineq4} holds for each $i \in \mathcal{N}_j \cap S_e$, where $j \in S_o$, then adding them up
and noting \eqref{keyineq1} yields,
\[
u_j(m-1) \geq u_j(k), \quad k=0, \ldots, m-1, \quad j \in S_o.
\]
Hence, for $j \in S_o$
\begin{equation}\label{keySo1}
u_j(m-1)\leq -1
\end{equation}
implies that $u_j(k) \leq -1$ for all $k=0,\ldots,m-1$.

A similar argument shows that if
\begin{align*}
& a_{ij} \Bigl\{ \alpha \left[x_i(2m-1)-x_j(2m-1)\right]+\beta \left[v_i(2m-1)-v_j(2m-1)\right] \Bigr\}  \nonumber \\
&\leq a_{ij} \Bigl\{ \alpha [x_i(k)-x_j(k)]+\beta [v_i(k)-v_j(k)]\Bigr\} 
\end{align*}
for each $i \in \mathcal{N}_j \cap S_e$, where $j \in S_o$, then adding them up
and noting \eqref{keyineq1} yields,
\[
u_j(2m-1) \leq u_j(k), \quad k=m, \ldots, 2m-1.
\]
Hence, for $j \in S_o$
\begin{equation}\label{keySo2}
u_j(2m-1)\geq 1
\end{equation}
implies that $u_j(k) \geq 1$ for all $k=m,\ldots,2m-1$.

Similarly, we can show that for $i \in S_e$,
\begin{equation} \label{keySe1}
u_i(m-1)\geq 1
\end{equation}
implies that $u_i(k) \geq 1$ for $k=0,\ldots,m-1$,
and that
\begin{equation}\label{KeySe2}
u_i(2m-1)\leq -1
\end{equation}
implies that $u_i(k)\leq -1$ for $k=m,\ldots,2m-1$.

To summarize, if there is an edge connecting agents within $S_e$ or $S_o$, we set their initial states the same, i.e.,
\begin{equation}\label{cyclic-cond}
x_i(0)=x_j(0)\, \text{ for } (i,j) \in \mathcal{E}, \text{ if } i,j \in S_e, \text{ or } i,j \in S_o,
\end{equation}
then the $2m$ inequalities for agent $i \in \mathcal{V}$, i.e., either \eqref{group1} or \eqref{group2},
are reduced to only two inequalities, i.e., inequalities \eqref{keySo1} and \eqref{keySo2} for $j \in S_o$, or
inequalities \eqref{keySe1} and \eqref{KeySe2} for $i \in S_e$.

{\bf Step 3:} It is clear that
if for each edge $(i,j) \in \mathcal{E}$, where $i \in S_e$ and $j \in S_o$
the following two conditions
\begin{align}
&\hspace{0.6cm} a_{ij} \Bigl\{ \alpha\left[x_i(m-1)-x_j(m-1)\right]+\beta\left[v_i(m-1)-v_j(m-1)\right]\Bigr\} \nonumber \\
&=a_{ij}\Bigl\{ \alpha [x_i(0)+(m-1)v_i(0)+\tfrac{(m-1)(m-2)}{2}-x_j(0)-(m-1)v_j(0)+\tfrac{(m-1)(m-2)}{2}]\nonumber \\
& \hspace*{0.6cm}+\beta \left[v_i(0)+m-1-v_j(0)+m-1\right] \Bigr\}\nonumber\\
&=a_{ij} \Bigl\{ \alpha \left[x_i(0)-x_j(0)-2m+2\right]+\beta(m-2)\Bigr\} \leq -1, \label{ineq-di-key1}
\end{align}
and
\begin{align}
&\hspace{0.6cm} a_{ij}\Bigl\{ \alpha\left[x_i(2m-1)-x_j(2m-1)\right]+\beta\left[v_i(2m-1)-v_j(2m-1)\right]\Bigr\} \nonumber\\
&=a_{ij}\Bigl\{\alpha \left[x_i(0)-v_i(0)-1-x_j(0)+v_j(0)-1\right]+\beta \left[v_i(0)+1-v_j(0)+1\right]\Bigr\} \nonumber\\
&=a_{ij}\Bigl\{\alpha \left[x_i(0)-x_j(0)+m-2 \right]-\beta(m-2)\Bigr\} \geq 1, \label{ineq-di-key2}
\end{align}
are satisfied for some initial states $x_i(0)$ and $x_j(0)$ where $i \in S_e$, $j \in S_o$, and $(i,j) \in \mathcal{E}$.
then \eqref{keySo1} and \eqref{keySo2} for $j \in S_o$, and \eqref{keySe1} and \eqref{KeySe2} for $i \in S_e$.

Two inequalities \eqref{ineq-di-key1} and \eqref{ineq-di-key2} are equivalent to
\begin{equation}\label{xxxx}
\tfrac{1}{a_{ij}}+(\beta-\alpha)(m-2) \leq \alpha (x_i(0)-x_j(0)) \leq 2\alpha(m-1)-\beta(m-2)-\tfrac{1}{a_{ij}},
\end{equation}
for each $(i,j) \in \mathcal{E}$, where $i \in S_e$ and $j \in S_o$.

We see that suitable $x_i(0)$ and $x_j(0)$ where $i \in S_e$, $j \in S_o$, and $(i,j) \in \mathcal{E}$
exist if and only if
\begin{equation}\label{keym}
\tfrac{1}{a_{ij}}+(\beta-\alpha)(m-2) \leq 2\alpha(m-1)-\beta(m-2)-\tfrac{1}{a_{ij}},%
\end{equation}
for each $(i,j) \in \mathcal{E}$, where $i \in S_e$ and $j \in S_o$.

For $m>2$, \eqref{keym} is equivalent to
\begin{equation*}
\beta \leq \tfrac{3m-4}{2m-4}\alpha -\tfrac{1}{a_{ij}(m-2)}.
\end{equation*}
If we take the value of $m$ to be very large, we obtain that
\[
\beta \leq \lim _{m \rightarrow +\infty}
\left[\tfrac{3m-4}{2m-4}\alpha-\tfrac{1}{a_{ij}(m-2)}\right]=\tfrac{3}{2}\alpha.
\]
Therefore for any $\alpha$ and $\beta$ which satisfy the condition \eqref{period},
if
\begin{equation}\label{suffm}
m \geq \tfrac{4(\alpha-\beta)+\tfrac{2}{\bar{a}}}{3\alpha-2\beta},
\end{equation}
where $\bar{a}$ is defined by \eqref{min-edge-weight-ns},
then all the inequalities \eqref{keym} are satisfied.

From the above analysis, we see that the multi-agent system \eqref{agent-model-di} under
the distributed controller \eqref{controller-di} with the feedback gain parameters
$\alpha$ and $\beta$ satisfying the condition \eqref{period} exhibits a periodic behavior with period
$T=2m$, where $m$ satisfies \eqref{suffm}, if initial states $x_\ell(0), \ell \in \mathcal{V}$
satisfy the conditions \eqref{initial-cond-v}, \eqref{cyclic-cond}, and \eqref{xxxx}.
\end{proof}

%

\begin{remark}\label{remark-double}
We note that the periodic behavior presented in Theorem \ref{thm1} exists for a particular set of
initial states as given by \eqref{initial-cond-v}, \eqref{cyclic-cond}, and \eqref{xxxx}.
The condition \eqref{period} on feedback gain parameters $\alpha$ and $\beta$ for achieving periodic behaviors
does not depend on the network topology. However, the period $T=2m$, where $m$ is given by \eqref{suffm}, depends on
the network topology and the feedback gain parameters.
\end{remark}

\subsection{Neutrally Stable System Case}
For the multi-agent system \eqref{agent-model-ns} under the distributed controller \eqref{controller-di}, we
have the following result.

\begin{theorem}\label{thm2}
Assume that Assumption \ref{ass-network} is satisfied.
Consider the multi-agent system \eqref{agent-model-ns} and the distributed controller \eqref{controller-di}.
If the feedback gain parameters  $\alpha$ and $\beta$ satisfy
\begin{equation}\label{condition-ns}
|\alpha| \leq \sgn (a) (\beta-\frac{a}{\bar{a}}),
\end{equation}
with $\bar{a}$ defined by \eqref{min-edge-weight-ns},
then the multi-agent system exhibits a periodic solution with period $T=4$.
\end{theorem}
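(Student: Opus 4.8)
The plan is to reproduce the constructive strategy of Theorem~\ref{thm1}, but to exploit the oscillatory structure of $A$ so that the period closes at $T=4$ for every topology. First I would fix a saturated input pattern on the even/odd bipartition: for each $i\in S_e$ take the length-$4$ sign sequence $(c_i(0),\dots,c_i(3))=(+1,+1,-1,-1)$ and for each $j\in S_o$ the negated sequence $(-1,-1,+1,+1)$, extended periodically in $k$. Since the characteristic polynomial of $A$ is $\lambda^2-2a\lambda+1$, Cayley--Hamilton gives $A^2+I=2aA$, and $\det A=1$ so $A$ is invertible with $A^{-1}B=[-1,\ 0]\T$. I would then impose the half-period-negation ansatz $z_i(k+2)=-z_i(k)$ on the stacked state $z_i=[x_i,\ v_i]\T$; combined with $c_i(k+2)=-c_i(k)$ this forces $z_i(k+4)=z_i(k)$ automatically, so $T=4$ follows the moment the single closure equation $z_i(2)=-z_i(0)$ is enforced.

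Solving that closure, $(A^2+I)z_i(0)=-\bigl(ABc_i(0)+Bc_i(1)\bigr)$, i.e. $2aA\,z_i(0)=-(ABc_i(0)+Bc_i(1))$, gives the unique initial state $z_i(0)=\tfrac{1}{2a}[\,c_i(1),\,-c_i(0)\,]\T$. Hence every agent in $S_e$ starts at $z_e(0)=\tfrac{1}{2a}[1,-1]\T$ and every agent in $S_o$ at $z_o(0)=-z_e(0)$, and by linearity $z_o(k)=-z_e(k)$ for all $k$. A short computation gives $z_e(1)=\tfrac{1}{2a}[-1,-1]\T$, $z_e(2)=-z_e(0)$, $z_e(3)=-z_e(1)$, which are four distinct points, so the solution is genuinely a period-$4$ oscillation.

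The crucial simplification is that all agents within $S_e$ follow the identical trajectory $z_e(\cdot)$ and all within $S_o$ the identical $z_o(\cdot)=-z_e(\cdot)$; therefore every edge internal to $S_e$ or to $S_o$ contributes nothing to \eqref{controller-di}, and the input of $i\in S_e$ collapses to $u_i(k)=-2W_i\bigl[\alpha x_e(k)+\beta v_e(k)\bigr]$ with $W_i=\sum_{j\in\mathcal N_i\cap S_o}a_{ij}$ (symmetrically for $j\in S_o$). Because $z_e(k+2)=-z_e(k)$ forces $u_i(k+2)=-u_i(k)$, the four saturation-consistency requirements $\sigma(u_i(k))=c_i(k)$ reduce to just two: $\tfrac{W_i(\beta-\alpha)}{a}\ge 1$ (from $k=0$) and $\tfrac{W_i(\beta+\alpha)}{a}\ge 1$ (from $k=1$), and the same pair appears for $S_o$.

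The final step, which I expect to be the main obstacle, is to show that the single scalar hypothesis \eqref{condition-ns} secures this pair for every agent at once. I would first note that the distance layering makes each agent adjacent to at least one agent of opposite parity (a vertex at distance $d\ge 1$ has a neighbor at distance $d-1$, the root has neighbors at distance $1$), so each $W_i$ contains at least one cross-edge weight and hence $W_i\ge\bar a$, with $\bar a$ as in \eqref{min-edge-weight-ns}. Requiring both inequalities is equivalent to $\tfrac{W_i\,(\beta-\sgn(a)\,|\alpha|)}{a}\ge 1$: for $a>0$ the binding numerator is the smaller one $\beta-|\alpha|$, whereas for $a<0$ multiplying through by the negative $a$ reverses the inequalities and the binding numerator becomes $\beta+|\alpha|$. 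Evaluating this at the worst case $W_i=\bar a$ reproduces exactly \eqref{condition-ns}, and since \eqref{condition-ns} forces the coefficient $(\beta-\sgn(a)|\alpha|)/a$ to be positive, the inequality only improves for the actual $W_i\ge\bar a$. The $\sgn(a)$ factor is precisely what reconciles the two regimes, so under \eqref{condition-ns} all saturation-consistency inequalities hold and the constructed sequence is a genuine period-$4$ solution.
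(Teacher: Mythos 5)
Your proposal is correct and follows essentially the same route as the paper's proof: the same $(+1,+1,-1,-1)$ saturated input pattern on the even/odd distance classes, the same forced initial states $\pm\tfrac{1}{2a}[1,-1]\T$, reduction of the $4N$ saturation inequalities to the two at $k=0,1$ via the half-period sign flip, and the same worst-case bound through $\bar{a}$ with the $\sgn(a)$ case split. Your only deviations are cosmetic refinements — closing the orbit with Cayley--Hamilton ($z_i(2)=-z_i(0)$) rather than inverting $I-A^4$, aggregating cross-edge weights into $W_i\geq\bar a$ instead of imposing the per-edge inequalities, and making explicit that every agent has an opposite-parity neighbor — all of which yield the paper's conclusion.
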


\begin{proof}
Let the sets $S_e$ and $S_o$ be defined by \eqref{set-even-odd}.
The periodic solution with period $T=4$ is such that all agents are always in saturation.
Moreover,
\begin{equation}\label{ns-group1}
\begin{system}{ccl}
& u_i(k) \geq 1,  & k=0,1, \\
& u_i(k)  \leq -1,  & k=2,3,
\end{system} \quad  i \in S_e,
\end{equation}
\begin{equation}\label{ns-group2}
\begin{system}{ccl}
& u_i(k) \leq -1, &  k=0,1, \\
& u_i(k) \geq 1, & k=2,3,
\end{system}\quad i \in S_o.
\end{equation}
In what follows, we will show that the above $4N$ inequalities are
satisfied for certain initial states $x_i(0)$ and $v_i(0)$, $i \in \mathcal{V}$, and that
moreover the multi-agent system exhibits a periodic behavior with period $T=4$
for these initial states in three steps.

{\bf Step 1:} Due to the required characteristic of the saturated input sequences, i.e., \eqref{ns-group1} and \eqref{ns-group2}
and Definition \ref{def-periodic}, we see that in order to have the periodic solution defined in Definition \ref{def-periodic},
it is sufficient to have $x_i(T)=x_i(0)$ and $v_i(T)=v_i(0)$ for all $i \in \mathcal{V}$.

It follows from \eqref{agent-model-ns} and \eqref{ns-group1} that for $i \in S_e$,
\begin{align*}
\begin{bmatrix}
x_i(1) \\
v_i(1)
\end{bmatrix}&=A\begin{bmatrix}
x_i(0) \\
v_i(0)
\end{bmatrix}+B, \\
\begin{bmatrix}
x_i(2) \\
v_i(2)
\end{bmatrix}&=A^2\begin{bmatrix}
x_i(0) \\
v_i(0)
\end{bmatrix}+AB+B, \\
\begin{bmatrix}
x_i(3) \\
v_i(3)
\end{bmatrix}&=A^3\begin{bmatrix}
x_i(0) \\
v_i(0)
\end{bmatrix}+A^2B+AB-B, \\
\begin{bmatrix}
x_i(4) \\
v_i(4)
\end{bmatrix}&=A^4 \begin{bmatrix}
x_i(0) \\
v_i(0)
\end{bmatrix}+A^3 B+A^2B-AB-B.
\end{align*}
Therefore, in order to have
\[
\begin{bmatrix}
x_i(0)\\
v_i(0)
\end{bmatrix}=\begin{bmatrix}
x_i(4)\\
v_i(4)
\end{bmatrix}, \quad i \in S_e,
\]
we need
\[
\begin{bmatrix}
x_i(0) \\
v_i(0)
\end{bmatrix}=(I-A^4)^{-1}(A^3B+A^2B-AB-B)=(I-A^{4})^{-1}(A^2-I)(I+A)B.
\]
By plugging the matrices $A$ and $B$ given by \eqref{agent-model-ns} into above equation, we get
\begin{equation}\label{initial-even-ns}
\begin{bmatrix}
x_i(0)\\
v_i(0)
\end{bmatrix}=\begin{bmatrix}
\tfrac{1}{2a} \\
-\tfrac{1}{2a}
\end{bmatrix}, \quad i \in S_e,
\end{equation}
Similarly, in order to have
\[
\begin{bmatrix}
x_i(0)\\
v_i(0)
\end{bmatrix}=\begin{bmatrix}
x_i(4)\\
v_i(4)
\end{bmatrix}, \quad i \in S_o,
\]
we need
\begin{equation}\label{initial-odd-ns}
\begin{bmatrix}
x_i(0)\\
v_i(0)
\end{bmatrix}=\begin{bmatrix}
-\tfrac{1}{2a} \\
\tfrac{1}{2a}
\end{bmatrix}, \quad i \in S_o.
\end{equation}

{\bf Step 2:} In this step, we show that four inequalities \eqref{ns-group1} and \eqref{ns-group2}
for each agent $i \in \mathcal{V}$, can be reduced to
only two inequalities.

Consider the input for an agent $j \in S_o$,
\begin{align}
u_j(k) &=\alpha \sum _{i \in \mathcal{N}_j} a_{ij}(x_i(k)-x_j(k))+\beta \sum _{i \in \mathcal{N}_j} a_{ij}(v_i(k)-v_j(k)) \nonumber \\
&=\alpha \sum _{i \in \mathcal{N}_j \cap S_e} a_{ij}(x_i(k)-x_j(k))+\beta \sum _{i \in \mathcal{N}_j \cap S_e} a_{ij}(v_i(k)-v_j(k)) \nonumber \\
&\hspace*{0.4cm}+\alpha \sum _{i \in \mathcal{N}_j \cap S_o} a_{ij}(x_i(k)-x_j(k))+\beta \sum _{i \in \mathcal{N}_j \cap S_o} a_{ij}(v_i(k)-v_j(k)). \label{keyineq2-ns}
\end{align}
Taking into account that $x_i(0)=x_j(0)$ and $v_i(0)=v_j(0)$ for all $i,j \in S_o$ implied by \eqref{initial-odd-ns}, from \eqref{keyineq2-ns},
we obtain
\begin{equation}\label{keyineq1-ns}
u_j(k) =\alpha \sum _{i \in \mathcal{N}_j \cap S_e} a_{ij}(x_i(k)-x_j(k))+\beta \sum _{i \in \mathcal{N}_j \cap S_e} a_{ij}(v_i(k)-v_j(k)).
\end{equation}
Similarly, we can show that for agent $i \in S_e$,
\begin{equation*}
u_i(k) =\alpha \sum _{j \in \mathcal{N}_i \cap S_o} a_{ij}(x_j(k)-x_i(k))+\beta \sum _{j \in \mathcal{N}_i \cap S_o} a_{ij}(v_j(k)-v_i(k)).
\end{equation*}

From \eqref{keyineq1-ns} and the initial states given by \eqref{initial-even-ns}  and \eqref{initial-odd-ns},
it is easy to verify that for $j \in S_o$, $u_j(k+2) \geq 1$  are equivalent to  $u_j(k) \leq -1$ for $k=0,1$.
Similarly for $i \in S_e$, $u_j(k+2) \leq -1$ is equivalent to $u_i(k) \geq 1$ for $k=0,1$.
Thus, the inequalities \eqref{ns-group1} and \eqref{ns-group2} are equivalent to the following inequalities
\begin{equation}\label{even-together}
\begin{system}{ccl}
u_i(0) & \geq & 1 \\
u_i(1) & \geq & 1, \quad i \in S_e,
\end{system}
\end{equation}
and
\begin{equation}\label{odd-together}
\begin{system}{ccl}
u_j(0) & \leq & -1 \\
u_j(1) & \leq & -1, \quad j \in S_o,
\end{system}
\end{equation}

{\bf Step 3:} It is clear that
if for each edge $(i,j) \in \mathcal{E}$, where $i \in S_e$ and $j \in S_o$
the following two conditions
\begin{equation}\label{key-ineq-ns}
\begin{system}{ccl}
a_{ij} \Bigl\{ \alpha\left[x_i(0)-x_j(0)\right]+\beta\left[v_i(0)-v_j(0)\right]\Bigr\}&=a_{ij}\frac{\alpha-\beta}{a} \leq -1, \\
a_{ij}\Bigl\{ \alpha\left[x_i(1)-x_j(1)\right]+\beta\left[v_i(1)-v_j(1)\right]\Bigr\}&=a_{ij}\frac{-\alpha-\beta}{a} \leq -1,
\end{system}
\end{equation}
are satisfied then \eqref{even-together} and \eqref{odd-together} are satisfied.
It is easy to see that if the feedback gain parameters $\alpha$ and $\beta$ satisfy \eqref{condition-ns},
then \eqref{key-ineq-ns} hold for each edge $(i,j) \in \mathcal{E}$, where $i \in S_e$ and $j \in S_o$.

From the above analysis, we see that the multi-agent system \eqref{agent-model-ns} under
the distributed controller \eqref{controller-di} with the feedback gain parameters
$\alpha$ and $\beta$ satisfying the condition \eqref{condition-ns} exhibits a periodic behavior with period
$T=4$ if the initial states 
satisfy the conditions \eqref{initial-even-ns} and \eqref{initial-odd-ns}.
\end{proof}

\begin{remark}
We note that the periodic behavior presented in Theorem \ref{thm2} exists for a particular set of
initial states as given by \eqref{initial-even-ns} and \eqref{initial-odd-ns}.
The period is $T=4$, which does not depend on the network topology,
while the feedback gain parameters for achieving this periodic behavior depend on the network topology as given by
\eqref{condition-ns}. This is in contrast to the double integrator case since it is noted in Remark \ref{remark-double}
that the feedback gain parameters for achieving periodic behaviors do not depend on the network topology, however,
the periodic $T$ depends on the network topology.
\end{remark}

\section{Illustrative Examples}\label{sec-example}
In this section, we present two examples to illustrate the result.
In both examples, the network consists of $N=7$ agents and the
topology is given by the undirected weighted graph depicted
in Figure ~\ref{fig1}.
\begin{figure}[ht]
  \begin{center}
    \includegraphics[scale=0.4]{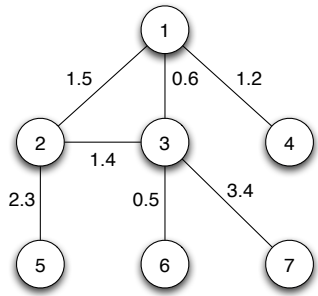}
  \end{center}
  \caption{Network with seven agents}
  \label{fig1}
\end{figure}

\subsection{Double Integrator Case}
We first consider the multi-agent model \eqref{agent-model-di} with the feedback law \eqref{controller-di}.
Choose the feedback gain parameters $\alpha=0.4$ and $\beta=0.42$ so that \eqref{period} is satisfied.
It is easy to see that $\bar{a}=a_{36}=0.5$, and therefore we choose $m=11$ such that
\eqref{suffm} is satisfied.
From the proof of Theorem \ref{thm1}, we see that
the multi-agent system exhibits a periodic solution of period $T=22$ if
the initial states satisfy \eqref{initial-cond-v}, \eqref{cyclic-cond} and \eqref{xxxx} with $m=11$, i.e.,
$v_i(0)=-5.5$ for $i \in S_e=\{1,5,6,7\}$ and $v_i(0)=5.5$ for $i \in S_o=\{2,3,4\}$ and
\begin{equation}\label{key-initial-cond}
\begin{system}{ccl}
x_2(0) = x_3(0),\\
2.1167 \leq x_1(0)-x_2(0) \leq 8.8833,  \\
4.6167 \leq x_1(0)-x_3(0) \leq 6.3833, \\
2.5333 \leq x_1(0)-x_4(0) \leq 8.4667,  \\
1.5370 \leq x_5(0)-x_2(0) \leq 9.4630,  \\
5.4500 \leq x_6(0)-x_3(0) \leq 5.5500,  \\
1.1852 \leq x_7(0)-x_3(0) \leq 9.8147.
\end{system}
\end{equation}
We then choose
\[
x_1(0)=21,\, x_2(0)=16.02, \, x_3(0)=16.02, \, x_4(0)=15, \, x_5(0)=20, \, x_6(0)=21.5, \, x_7(0)=18,
\]
so that the above condition is satisfied.

With these initial states, The multi-agent system exhibits periodic behavior of period $22$
as shown in Figure ~\ref{periodic22}.
To make the figure more clear, we have only included
state trajectories for agents $1$, $2$, $5$ and $7$ in Figure ~\ref{periodic22}.
State trajectories for agent $1$ are also given in Figure ~\ref{fig:xv-non-tree}.

Note that if the network topology is a tree, for example, if
there is no edge between agent $2$ and agent $3$ in Figure ~\ref{fig1},
we do not need to choose $x_2(0)=x_3(0)$ in order to generate the periodic solutions.
This can be seen by noticing that the last two terms in \eqref{keyineq2} are vanishing,
thus \eqref{keyineq1} is satisfied for all $x_2(0)$ and $x_3(0)$.
\begin{figure}[ht]
\begin{center}
\includegraphics[scale=0.5]{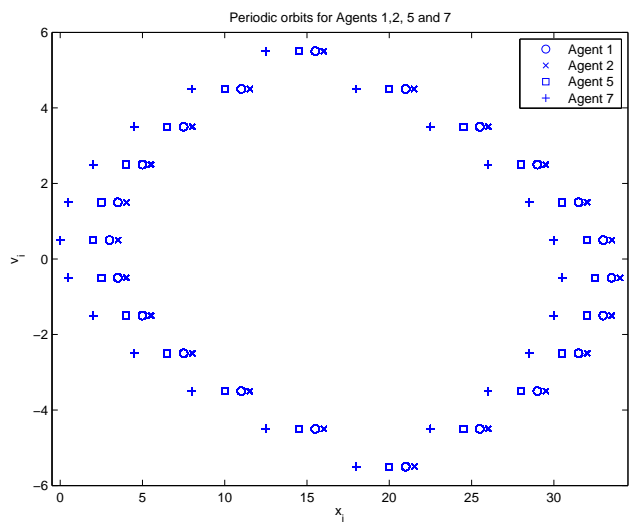}
\end{center}
\caption{Periodic behavior of period $22$}
\label{periodic22}
\end{figure}

\begin{figure}[ht]
\begin{center}
\subfigure
{
\label{fig:di-position}
\includegraphics[scale=0.35]{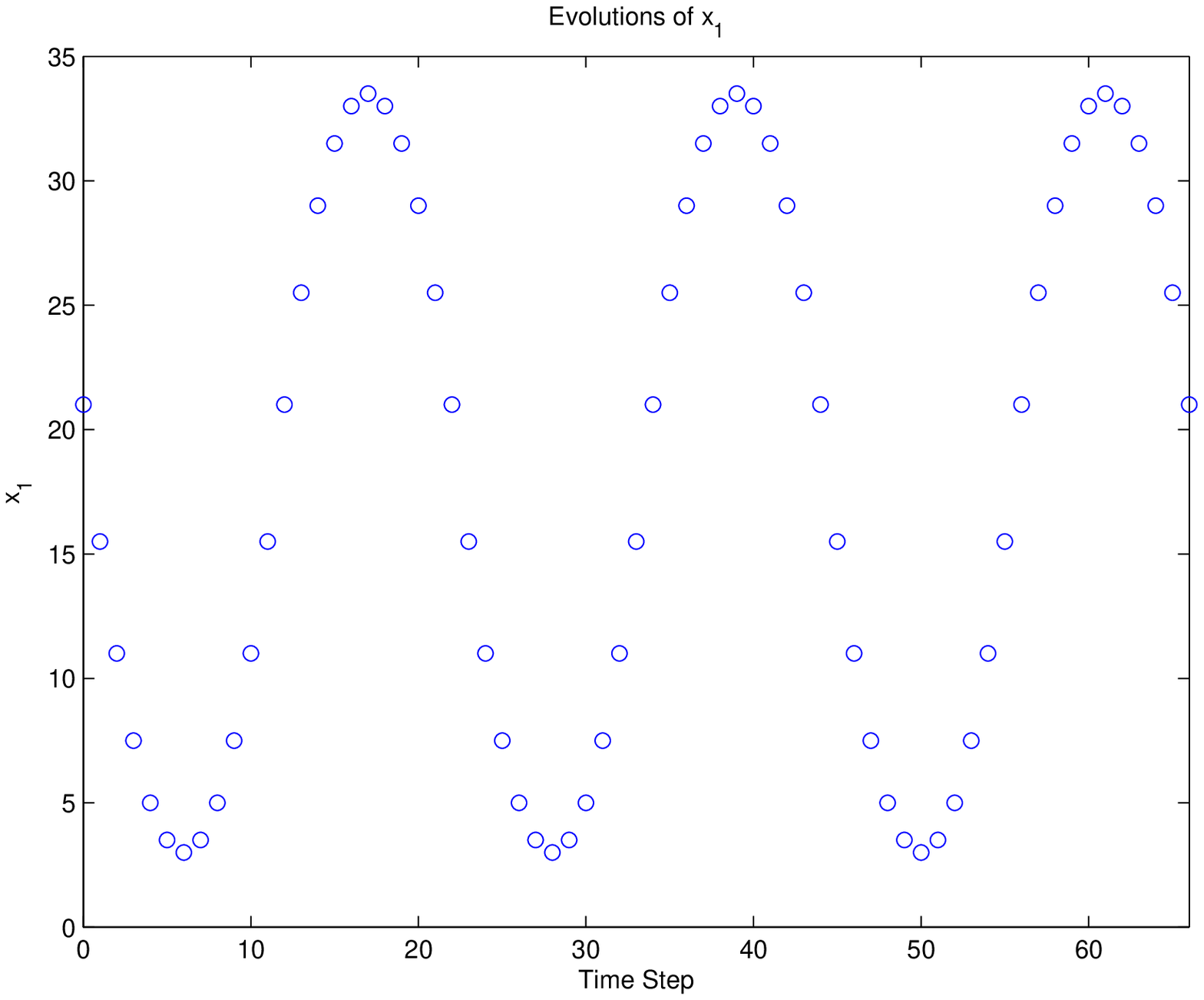}
}
{
\label{fig:di-velocity}
\includegraphics[scale=0.35]{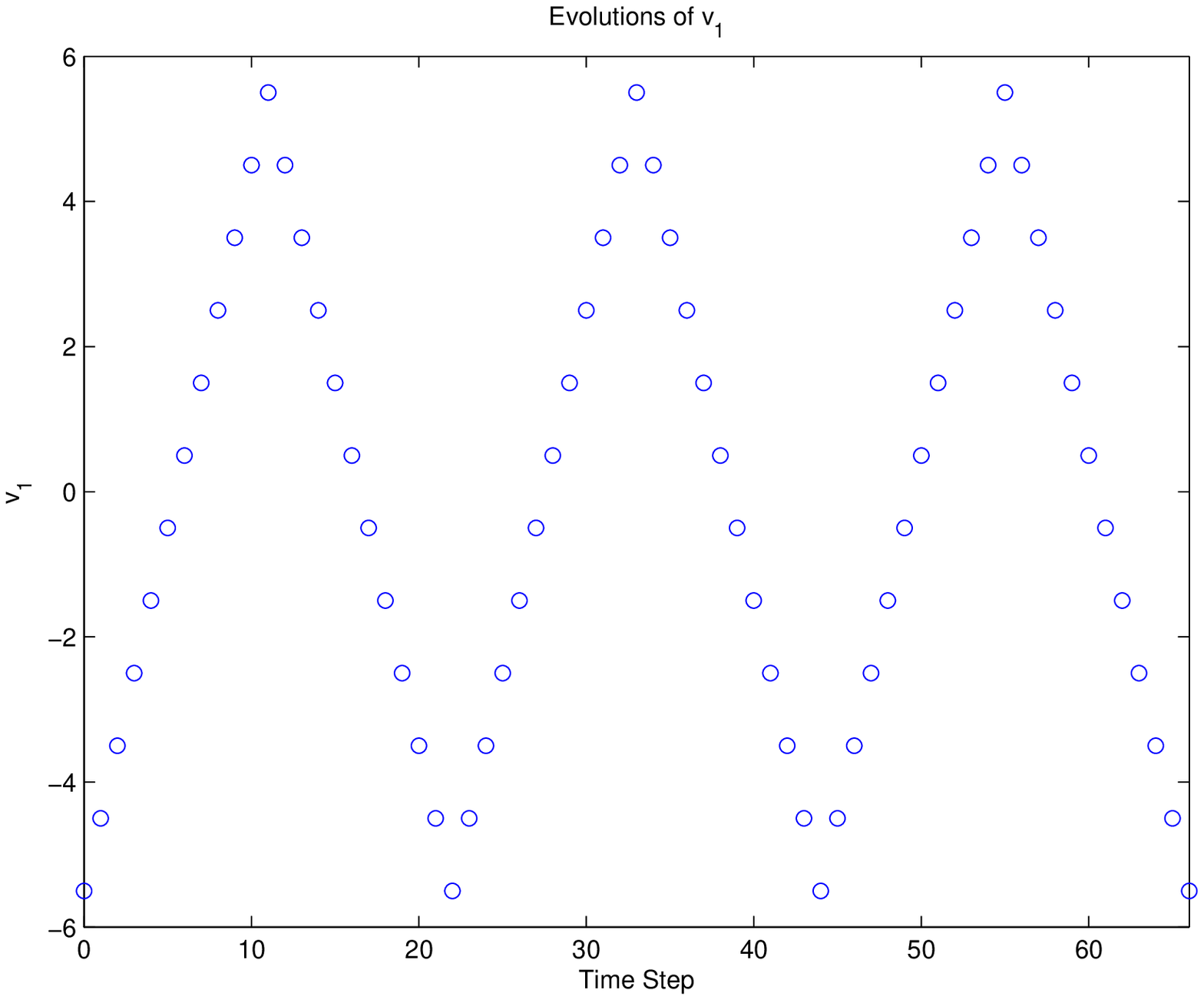}
}
\end{center}
\caption{Trajectories of $x_1$ and $v_1$}
\label{fig:xv-non-tree}
\end{figure}


\subsection{Neutrally Stable System Case}
Next, we consider the multi-agent model \eqref{agent-model-ns} with the feedback law \eqref{controller-di}.
Let $a=\frac{1}{2}$, and choose the feedback gain parameters $\alpha=-0.5$ and $\beta=2$ so that \eqref{condition-ns} is satisfied.
From the proof of Theorem \ref{thm2}, we see that
the multi-agent system exhibits a periodic solution of period $T=4$ if
the initial states satisfy \eqref{initial-even-ns} and \eqref{initial-odd-ns}, that is,
$x_i(0)=1, \, v_i(0)=-1$ for $i \in S_e=\{1,5,6,7\}$ and $x_i(0)=-1, \, v_i(0)=-1$ for $i \in S_o=\{2,3,4\}$.
Figure ~\ref{periodic-ns} shows that the multi-agent system exhibits a periodic behavior with $T=4$.
State evolutions for agent $1$ are given in Figure ~\ref{fig:xv-ns}.

\begin{figure}[ht]
\begin{center}
\includegraphics[scale=0.5]{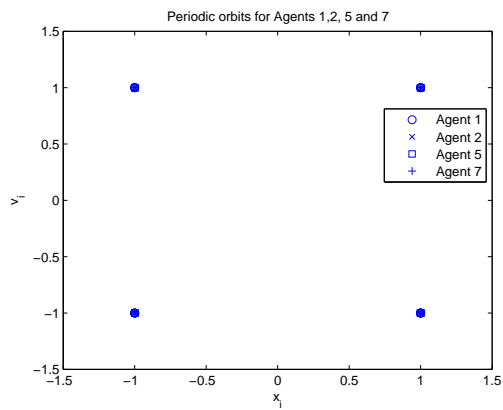}
\end{center}
\caption{Periodic behavior of period $4$}
\label{periodic-ns}
\end{figure}

\begin{figure}[ht]
\begin{center}
\subfigure
{
\label{fig:ns-position}
\includegraphics[scale=0.35]{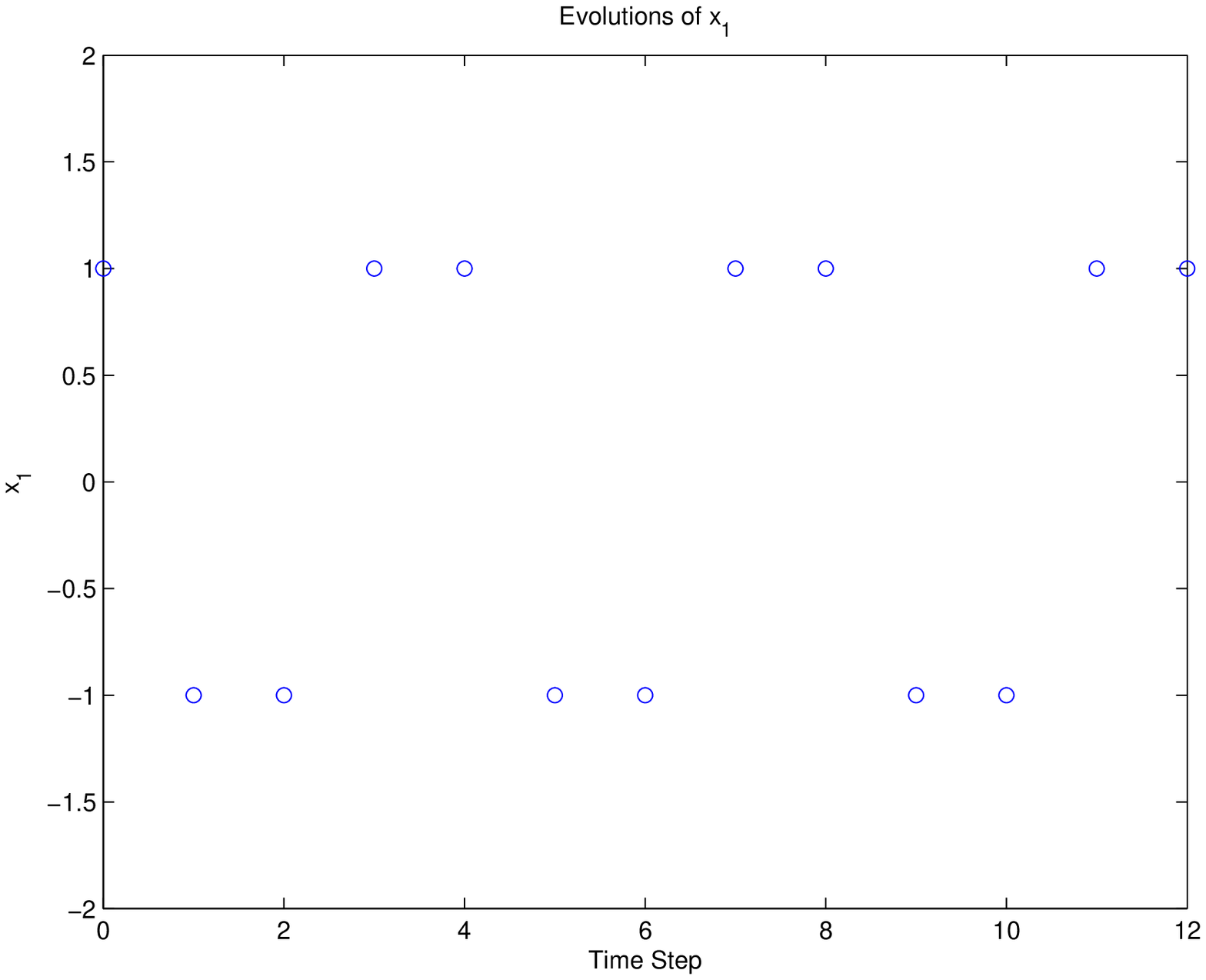}
}
{
\label{fig:ns-velocity}
\includegraphics[scale=0.35]{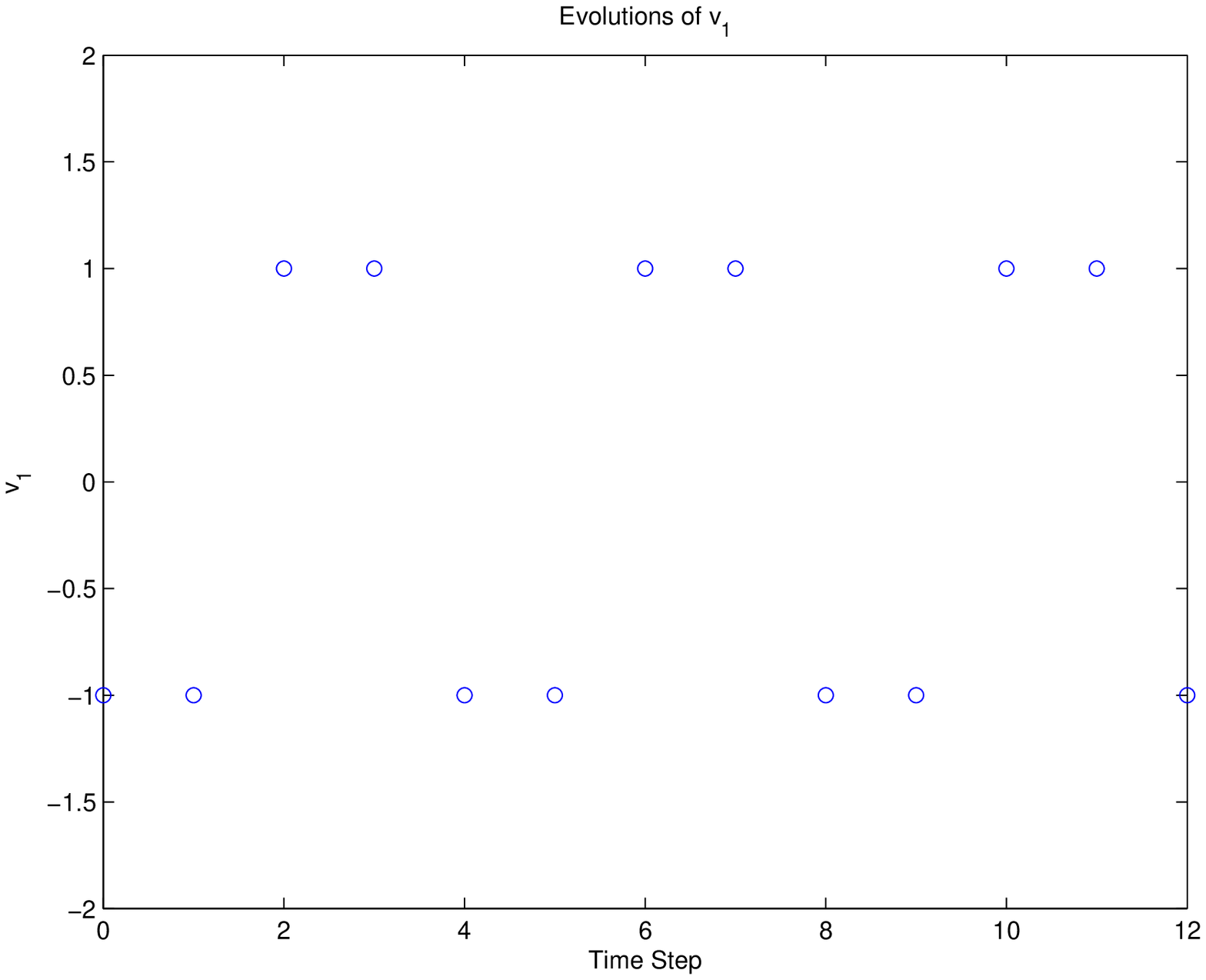}
}
\end{center}
\caption{Trajectories of $x_1$ and $v_1$}
\label{fig:xv-ns}
\end{figure}

\section{Conclusion}\label{sec-conclusion}
In this paper, we present two constrained multi-agent models, which generate
periodic behaviors for a particular set of initial states if the feedback gain parameters satisfy a certain condition.
In one model, the period depends on the network topology, however, the feedback gain parameters for achieving this are independent
of the network topology, while in the other model, the period is $T=4$, independent of the network topology,
however, the feedback gain parameters for achieving this depends on the network topology.
Whether periodic behaviors are stable and whether periodic behaviors exist for other initial states are currently under investigation.

\end{document}